\documentclass[letterpaper, 10 pt, conference]{ieeeconf}  

\IEEEoverridecommandlockouts                              

\usepackage{graphicx} 
\usepackage{amsmath} 
\usepackage{amssymb}  
\usepackage{cite}

\usepackage{color}
\newcommand{\rev}[1]{{\color{black}{#1}}}

\newtheorem{thm}{Theorem}
\newtheorem{prob}{Problem}

\newtheorem{rem}{Remark}

\DeclareMathOperator*{\minimize}{minimize}

\title{\LARGE \bf
Optimal shaping filter design for \\
data-driven feedforward controller tuning
}

\author{Yusuke Fujimoto$^{1}$
\thanks{*This work was supported by  JSPS KAKENHI Grant number 25K01254 and 26K07552. }
\thanks{$^{1}$Yusuke Fujimoto is with Graduate School of Engineering Science, the University of Osaka, Toyonaka, Osaka 560-8531, Japan. 
        {\tt\small fujimoto.yusuke.es@osaka-u.ac.jp}}%
}

\begin{document}

\maketitle
\thispagestyle{empty}
\pagestyle{empty}

\begin{abstract}

This paper discusses the data-driven model matching problem. 
In particular, this paper focuses on two-degree-of-freedom control systems, 
and consider to design feedforward controller from input-output data. 
An intuitive solution to this problem would be identifying the optimal controller using data, 
but this does not give the exact solution to the original problem. 
A shaping filter is required to compensate for this gap, and 
the main contribution of this paper is to give the optimal shaping filter. 
The proposed shaping filter is constructed from available information under reasonable assumptions, 
and its effectiveness is shown through a numerical example and a practical experiment. 
The relation between the proposed shaping filter and Estimated Response Iterative Tuning (ERIT) is 
also discussed, and it is shown that ERIT is optimal for a special case. 
\end{abstract}

\section{Introduction} \label{sec:intro}

Although mathematical models are important in model-based controller design, 
modeling itself is a difficult task in practice. 
Since it does not suffer from modeling effort and modeling error, 
data-driven approaches that design the controller directly from data 
have attracted many attentions in these days \cite{Hou:2013,Bazanella:2011}, 
and many works including practical applications have been reported 
\cite{Previdi:2010,Rojas:2010,Kano:2011,Wakasa:2012}. 

One of the most well-investigated data-driven design problems would be 
the data-driven model matching from one-shot data. 
Many methods have been proposed to solve this problem, 
e.g., Virtual Reference Feedback Tuning (VRFT) \cite{Campi:2002}, 
Fictitious Reference Iterative Tuning (FRIT) \cite{Kaneko:2013} (originally proposed in Japanese \cite{Soma:2004Eng}), 
inverse VRFT \cite{Sala:2005}, Non-iterative Data-driven Model matching (NDM) \cite{Karimi:2007}, 
Optimal Controller Identification (OCI) \cite{Campestrini:2017}, 
Virtual Internal Model Tuning (VIMT) \cite{Ikezaki:2019}, and so on. 
The above methods tune the parameters of \textit{feedback controller} from 
input-output data measured in a preliminary experiment. 
Another approach to solve the data-driven model matching problem is to design the parameter of 
\textit{feedforward controller} in a two-degrees-of-freedom control system. 
An important advantage of focusing on the feedforward controller is that the stability analysis becomes much easier than the 
one for the feedback case \cite{Hausden:2011}; if the initial feedback controller stabilizes the closed-loop and the updated feedforward controller itself is stable, the updated closed-loop system becomes stable.  
An example of the data-driven feedforward design methods is Estimated Response Iterative Tuning 
\cite{Kaneko:2017,Kaneko:2018}. 
Because it has the aforementioned advantage, 
several works related to ERIT have been reported \cite{Fujimoto:2018-3,
Fujimoto:2022,Ishihara:2023}.

Most of the above methods share a common structure; 
their cost functions can be understood as an identification problem 
whose true system is the controller which achieves perfect model matching \cite{Fujimoto:2023}. 
However, this leads to an undesirable problem. 
Let us call the controller which achieves perfect model matching the optimal controller. 
The problem is that similarity between the optimal controller and 
another controller does not immediately mean the similarity between their control performances. 
\rev{
This implies that approximating the optimal controller does not solve the original data-driven model matching problem 
even when noise-free data is available 
\cite{Campi:2002}. 
} 
A solution to this problem is to introduce a shaping filter (also known as a prefilter) to the cost function. 
An appropriate design of a shaping filter reduces the gap between the 
performance \rev{the reference model and the updated system}. 
However, the optimal shaping filter sometimes requires knowledge of the true system 
\cite{Campi:2002}, thus, how to design the shaping filter itself is an important work \cite{Matsui:2016, Kajiwara:2017}. 

Based on these backgrounds, this paper discusses the shaping filter design for 
data-driven model matching with a feedforward controller. 
\rev{We consider a parameter tuning problem of a fixed controller and noise-free data. 
} 
This problem is almost the same as the identification of an inverse model \cite{Blanken:2020,Ho:2018,Jung:2013}. 
\rev{
We mainly consider the case where perfect model matching can not be achieved with the structure. 
This is because the system structure itself is assumed to be unknown. 
One of the important applications of such a situation would be a servo motor; 
we can not know the inertia or load to be mounted on the motor in advance. 
}
The main contribution of this paper is to show the optimal shaping filter for such a feedforward identification problem. 
One of the interesting properties of the proposed shaping filter is that it can be 
constructed from available information under reasonable assumptions. 
In more detail, it only requires 
1) the initial input-output data is measured with a two-degrees-of-freedom control system setup, 
2) the reference model, feedback controller, and initial feedforward controller are known, and 
3) the spectrum of the initial reference signal is known. 
The third point is also reasonable as discussed later.

This paper is constructed as follows. 
Sec.~\ref{sec:Probset} first sets the problem discussed in this paper. 
Sec.~\ref{sec:main} gives the optimal shaping filter, and 
then Sec.~\ref{sec:relationwithERIT} discusses about the relation between the 
proposed shaping filter and ERIT. 
In particular, it is shown that ERIT is optimal for a certain setup. 
\rev{Sec.~\ref{sec:sim} and Sec.~\ref{sec:exp} show a numerical example and a practical experiment, respectively. }

\noindent
[Notation] 
The imaginary unit is denoted by $j$ throughout the paper. 
The complex frequency in the $z$-transform is denoted \rev{by $z$. }
In this paper, we use $\| G \|$ to show the $H_2$ norm of a 
single-input-single-output transfer function $G$, i.e., 
if the impulse response of $G$ is denoted by $g_k$, 
$\|G\|^2 = \sum_{k=0}^{\infty} g_k^2$. 
Throughout the paper, we slightly abuse the notation and 
regard the transfer function as both an operator and a complex function; 
for instance, the equation $y_k=Gu_k$ uses $G$ as an operator, and 
indicates that $y_k$ is the output of $G$ whose input is $u_k$. 
This implies that we use $\frac{1}{z}$ as backward shift operator, i.e., 
$\frac{1}{z} u_k = u_{k-1}$. 
We also use $G$ to denote the $z$-transform of $g_k$, and thus 
$G(e^{j \omega})$ implies that we substitute $e^{j\omega}$ to the complex function $G$. 
Based on these, it should be noted that the $H_2$ norm also has an expression given by 
$\|G\|^2=\frac{1}{2\pi} \int_{-\pi}^{\pi} \left|G(e^{j\omega})\right|^2d\omega$. 
\rev{
Another important property of the $H_2$ norm is about the 
variance of the output where the input is a white random sequence whose mean and variance are 0 and 1, respectively. 
Let $\varepsilon_k$ be the white random sequence whose mean and variance are 0 and 1, respectively. 
Then, 
$\mathbb{E}[y_k^2]=\mathbb{E}\left[\left(\sum_{i=0}^{\infty}g_i \varepsilon_{k-i} \right)^2\right]=
\sum_{i=0}^{\infty}g_i^2 =\|G\|^2$. 
}

\section{Problem setting} \label{sec:Probset}

\subsection{Data-driven model matching problem}
\begin{figure}[!t]
\centering
\includegraphics[width=7cm]{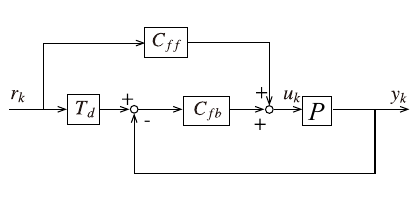}
\caption{Two-degrees of freedom control system}
\label{fig:2dof}
\end{figure}
This work focuses on 
a two-degrees-of-freedom control system illustrated in Fig.~\ref{fig:2dof}. 
The reference signal, control input, and output at time step $k$ are  
denoted by $r_k, u_k, $ and $y_k$, respectively. 
We consider a noise-free setting in this paper. 
The target system is a linear time-invariant discrete-time system 
whose transfer function is given by $P$. 
The feedback controller, feedforward controller, and reference model are 
denoted by $C_{fb}, C_{ff}(\rho)$, and $T_d$, respectively. 
Here, $\rho$ denotes the parameter of the feedforward controller. 
\rev{We assume that 
feedback loop is internally stable with $C_{fb}$. 
} 
The closed-loop system is denoted by $T(\rho)$, i.e., 
\begin{align}
T(\rho)=\frac{P(C_{ff}(\rho)+T_d C_{fb})}{1+PC_{fb}}. 
\end{align}

Based on the above notation, the data-driven model matching problem is given as follows. 
\begin{prob} \label{prob:DDCFF}
Let $W$ be the weight function. 
Assume that $P$ is unknown. 
Find $\rho$ which minimizes 
$J(\rho)=\| W( T(\rho) -T_d )\|^2$ from 
$W, T_d, C_{fb}$ and data $r_k^0, u_k^0, y_k^0 (k=0, \ldots, N-1)$ 
measured in a preliminary experiment with the initial feedforward parameter $\rho_0$.
\end{prob}


\subsection{Shaping filter design}

It is well recognized that $C^*_{ff}=\frac{T_d}{P}$ exactly makes the closed-loop in Fig.~\ref{fig:2dof} 
equivalent to $T_d$. 
Let $\mathcal{C}$ be the set of feasible feedforward controllers, i.e., 
\begin{align}
\mathcal{C}=\left\{C\mid  C=C_{ff}(\rho) \right\}. 
\end{align}
If $C^* \in \mathcal{C}$, i.e., if there exists $\rho^*$ which satisfies $C_{ff}(\rho^*)=C^*$,  
\rev{the optimal parameter $\rho^*$ is the minimizer of the following cost function.} 
\begin{align}
J_0(\rho)=\sum_{k=0}^{N-1} \left(T_d u_k^0 - C_{ff}(\rho)y_k^0 \right)^2
\end{align}
This is almost trivial by noting $C_{ff}(\rho^*)=\frac{T_d}{P}$ and $y_k^0 = Pu_k^0$. 
In fact, 
\begin{align}
J_0(\rho^*)=&\sum_{k=0}^{N-1} \left(T_d u_k^0 - C_{ff}(\rho^*)y_k^0 \right)^2 \nonumber \\ 
=&\sum_{k=0}^{N-1} \left(T_d u_k^0 - \frac{T_d}{P}Pu_k^0 \right)^2=0,
\end{align}
thus $\rho^*$ is one of the minimizer of $J_0(\rho)$. 
When $y_k^0$ is sufficiently excited, the minimization of $J_0(\rho)$ gives $\rho^*$. 

This paper focuses on the case where $C^* \notin \mathcal{C}$. 
The minimization of $J_0(\rho)$ does not give the exact solution of Problem~\ref{prob:DDCFF} in this case.  
Now the shaping filter design problem which is the main topic of this paper is formulated as follows. 
\begin{prob} \label{prob:shapingfilter}
\rev{Assume that the same information as Problem~\ref{prob:DDCFF} is available.}
Consider a cost function 
\begin{align}
J_L(\rho) = \sum_{k=0}^{N-1} \Bigl( L\left(T_d u_k^0 - C_{ff}(\rho)y_k^0 \right)\Bigr)^2 \label{eq:JL}
\end{align}
where $L$ is a shaping filter. 
Design $L$ so that minimizing $J_L(\rho)$ is equivalent to minimizing $J(\rho)$. 
\end{prob}

We also make an assumption on the reference signal. 
Let $\delta_k$ be the impulse signal given by 
\begin{align}
\delta_k = \begin{cases}
1 & \rev{k=0} \\
0& \rev{k\neq 0}
\end{cases}. 
\end{align} 
We assume the following \rev{condition} for the reference signal $r_k$. 
\begin{description}
\item[(A1)] 
The reference signal $r_k$ is described with 
a known filter $R$. In more detail, $r_k$ is one of the following signals. 
\begin{enumerate}
\item $r_k$ is given by 
$r_k = R \delta_k $. A typical example is the step signal, i.e., 
$R$ is the discrete integrator. 
\item $r_k$ is given by 
$r_k=R \varepsilon_k$ where $\varepsilon_k$ is \rev{a white noise} whose mean and variances are 0 and 1, respectively. 
\end{enumerate}
\end{description}
The assumption (A1) implies that the spectrum of $r_k$ is known. 
By noting $R$ can be $R=1$, (A1) is reasonable since 
this includes several widely-used references, e.g., step signal, multi-sine signal, 
white noise, band-limited noise, etc. 

\section{Main result} \label{sec:main}

Now the main result of the paper is summarized as follows. 
\begin{thm} \label{thm:main}
Let 
\begin{align}
L=\frac{1}{C_{ff}(\rho_0)+T_d C_{fb}}\frac{W}{R}\frac{1}{z^m} \label{eq:optimalL}
\end{align}
where 
$\frac{1}{z^m}$ is pure delay, which is introduced to make the filter proper.  
If this $L$ is stable, then minimizing $J_L(\rho)$ becomes equivalent to minimizing $J(\rho)$ with $N\to \infty$ with this shaping filter under the assumption (A1). 
\end{thm}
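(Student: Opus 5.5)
Our approach is to express the filtered residual $L\bigl(T_d u_k^0 - C_{ff}(\rho)y_k^0\bigr)$ as the output of a single linear time-invariant operator driven by the reference $r_k^0$. We then use (A1) to rewrite this operator as one driven by $\delta_k$ or by white noise $\varepsilon_k$. Finally, we identify the limit of $J_L(\rho)$ as $N\to\infty$ with the $H_2$ norm $\|W(T(\rho)-T_d)\|^2$, up to a positive constant that does not depend on $\rho$.

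\textbf{Step 1: residual as a function of $r_k^0$.}
From Fig.~\ref{fig:2dof} with parameter $\rho_0$, the input is
\begin{align}
u_k^0=\frac{C_{ff}(\rho_0)+T_d C_{fb}}{1+PC_{fb}}\,r_k^0 ,
\end{align}
and $y_k^0=Pu_k^0$. Hence
\begin{align}
T_d u_k^0 - C_{ff}(\rho)y_k^0 = \bigl(T_d - C_{ff}(\rho)P\bigr)\frac{C_{ff}(\rho_0)+T_dC_{fb}}{1+PC_{fb}}\, r_k^0 .
\end{align}
The key algebraic identity is
\begin{align}
T(\rho)-T_d=\frac{PC_{ff}(\rho)+PT_dC_{fb}-T_d-T_dPC_{fb}}{1+PC_{fb}}=\frac{PC_{ff}(\rho)-T_d}{1+PC_{fb}} .
\end{align}
The residual therefore equals $-(T(\rho)-T_d)\bigl(C_{ff}(\rho_0)+T_dC_{fb}\bigr)r_k^0$. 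Applying $L$ from \eqref{eq:optimalL} cancels the factor $C_{ff}(\rho_0)+T_dC_{fb}$, and we obtain
\begin{align}
L\bigl(T_d u_k^0 - C_{ff}(\rho)y_k^0\bigr) = -\frac{1}{z^m}\,\frac{W}{R}\,\bigl(T(\rho)-T_d\bigr)\,r_k^0 .
\end{align}
All transfer functions are SISO, so they commute. This is where stability of $L$ is needed: it ensures that the filtered signal is well defined and that no unstable pole/zero cancellation is hidden in the operator manipulation. It also relies on internal stability of the feedback loop.

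\textbf{Step 2: apply (A1).}
Substituting $r_k^0=R\,\delta_k$ or $r_k^0=R\,\varepsilon_k$ cancels $R$, and the filtered residual becomes $-z^{-m}W(T(\rho)-T_d)$ applied to $\delta_k$ or to $\varepsilon_k$.

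\emph{Case 1 (impulse).} The signal is the impulse response of $W(T(\rho)-T_d)$ delayed by $m$ steps. Thus $\lim_{N\to\infty}J_L(\rho)=\sum_k g_k^2=\|W(T(\rho)-T_d)\|^2=J(\rho)$, because the delay does not change the sum of squares.

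\emph{Case 2 (white noise).} By stationarity and ergodicity, $\frac1N J_L(\rho)\to\mathbb{E}[e_k^2]=\|W(T(\rho)-T_d)\|^2$, using the variance property stated in the Notation section. Scaling by $1/N$ does not change the minimizer.

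\textbf{Main obstacle.}
Step 1 is purely algebraic. The delicate point is justifying the limit, and in particular the commutation and cancellation of operators when signals start at $k=0$. The operator identities above hold exactly only if initial conditions are consistent, that is, zero initial state with causal signals. I would handle this by assuming that the data are generated from rest, so that $z$-transform identities hold exactly for one-sided sequences. Any transients from the filter $L$'s own initial conditions are then identically zero in Case 1. In Case 2 they decay because $L$ is stable, so their contribution to $\frac1N J_L$ vanishes as $N\to\infty$. For Case 2 I would cite the standard ergodic result that sample second moments of a stable filtered white sequence converge to the variance.
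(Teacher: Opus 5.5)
Your proposal is correct and follows essentially the same route as the paper. The paper likewise uses the identity $T(\rho)-T_d=\frac{PC_{ff}(\rho)-T_d}{1+PC_{fb}}$, lets $L$ cancel $(C_{ff}(\rho_0)+T_dC_{fb})R$, and then reads off $J(\rho)$ as the impulse-response energy in Case 1 and as the limit of $\frac{1}{N}J_L$ in Case 2. Your extra care about zero initial conditions and about transients decaying through the stable $L$ is slightly more explicit than the paper, which simply asserts convergence in probability.
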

\begin{proof}
First of all, note that 
the ideal cost function $J(\rho)$ satisfies 
\begin{align}
J(\rho)=&\|W(T(\rho)-T_d)\|^2 \nonumber\\
=&\left\|W\left( \frac{P(C_{ff}(\rho)+T_d C_{fb})}{1+PC_{fb}}-T_d\frac{1+PC_{fb}}{1+PC_{fb}}\right) \right\|^2\nonumber \\
=&\left\| \frac{W}{1+PC_{fb}}(C_{ff}(\rho)P-T_d) \right\|^2. \label{eq:Jrho_transformed}
\end{align}

Consider case 1 of (A1), i.e., $r_k=R\delta_k$. 
In this case, we have 
\begin{align}
\begin{split}
u^0_k &= \frac{C_{ff}(\rho_0)+T_d C_{fb}}{1+PC_{fb}}R \delta_k, \\ 
y^0_k &= \frac{P(C_{ff}(\rho_0)+T_d C_{fb})}{1+PC_{fb}}R \delta_k, 
\end{split}
\end{align}
thus if we employ $L$ defined by \eqref{eq:optimalL}, 
we have 
\begin{align}
Lu^0_k = \frac{W}{1+PC_{fb}} \delta_{k-m}, 
Ly^0_k = \frac{PW}{1+PC_{fb}} \delta_{k-m}.  
\end{align}
Based on these, 
\begin{align}
J_L(\rho)=\sum_{k=0}^{N-1}\left(\frac{W}{1+PC_{fb}}\left(
PC_{ff}(\rho)-T_d
\right) \delta_{k-m} \right)^2. 
\end{align}
Since this shows the squared sum of the impulse response of 
$\frac{W}{1+PC_{fb}}\left(
PC_{ff}(\rho)-T_d
\right)$, $J_L(\rho)$ converges to \rev{$J(\rho)$} with $N\to \infty$, and the 
statement has been proven. 

Now consider case 2, i.e., $r_k =R \varepsilon_k$. 
Note that minimizing $J_L(\rho)$ is equivalent to 
\begin{align}
\bar{J}_L(\rho) =\frac{1}{N}\sum_{k=0}^{N-1} \Bigl( L\left(T_d u_k^0 - C_{ff}(\rho)y_k^0 \right)\Bigr)^2.
\end{align}
In the case with $r_k = R\varepsilon_k$ and \eqref{eq:optimalL}, 
\begin{align}
\bar{J}_L(\rho) = 
\frac{1}{N} \sum_{k=0}^{N-1}\Bigl( \frac{W}{1+PC_{fb}} \left(T_d  - C_{ff}(\rho)P \right)\varepsilon_{k-m}\Bigr)^2. 
\end{align}
Here we omit some transformations which are similar to the former discussions. 
Since $\varepsilon_k$ is assumed to be white noise with variance 1, 
$\bar{J}_L(\rho) 
$ converges 
\rev{the sum of squared error of the impulse response of $
\frac{W}{1+PC_{fb}} \left(T_d  - C_{ff}(\rho)P \right)$, 
i.e., $\left\|\frac{W}{1+PC_{fb}} \left(T_d  - C_{ff}(\rho)P \right) \right\|^2$ 
in probability 
with $N\to \infty$. 
This means that the optimal solution of 
$\bar{J}_L(\rho)$ converges to 
the one of $J(\rho)$. 
}
This completes the proof. 
\end{proof}

From the proof, we have another expression of the result; 
the optimal shaping filter satisfies
\begin{align}
|L|^2= \left|\frac{1}{C_{ff}(\rho_0)+T_dC_{fb}}\frac{W}{R} \right|^2, 
\end{align}
where the argument $e^{j\omega}$ is omitted to make the notation easy. 

Two points should be noted; 
first, the optimal filter is not unique. 
In fact, multiplying an all-pass filter (such as $z^{-1}$) does not change 
the $H_2$ norm, thus a filter $L$ multiplied by an all-pass filter is another 
optimal filter. 
This observation also gives an important suggestion; 
if $L$ is unstable, inner-outer decomposition provides the optimal shaping filter. 
Second, $L$ given by \eqref{eq:optimalL} is constructed from available information; 
$C_{ff}(\rho_0), T_d, C_{fb}$ and $W$. 
Hence the optimal filter is available under \rev{reasonable assumptions}. 

\section{Relation with ERIT} \label{sec:relationwithERIT}

\subsection{Brief introduction of ERIT}

Estimated Response Iterative Tuning (ERIT) is one of the solutions \rev{to} Problem~\ref{prob:DDCFF}. 
This section discusses the relationship between the proposed optimal shaping filter and ERIT. 
To this end, we first briefly introduce the idea of ERIT. 

Consider the initial output $y_k^0$ with the initial parameter 
$\rho_0$. 
Since the closed-loop in Fig.~\ref{fig:2dof} is considered, 
\begin{align}
y_k^0 = \frac{P(C_{ff}(\rho_0)+T_d C_{fb})}{1+PC_{fb}}r_k \label{eq:yk0}
\end{align}
holds. 
Now consider the output with parameter $\rho$ denoted by $y_k(\rho)$. 
This signal satisfies 
\begin{align}
y_k(\rho) = \frac{P(C_{ff}(\rho)+T_d C_{fb})}{1+PC_{fb}}r_k. \label{eq:ykrho}
\end{align}
From \eqref{eq:yk0} and \eqref{eq:ykrho}, it holds 
\begin{align}
y_k(\rho) = \frac{C_{ff}(\rho)+T_d C_{fb}}{C_{ff}(\rho_0)+T_d C_{fb}} y_k^0. 
\end{align}
Since $T(\rho)$ becomes equivalent to $T_d$ when $y_k(\rho)=T_d r_k$, 
ERIT employs the cost function defined as 
\begin{align}
J_{erit}(\rho) = \sum_{k=0}^{N-1} \left(T_d r_k-  \frac{C_{ff}(\rho)+T_d C_{fb}}{C_{ff}(\rho_0)+T_d C_{fb}} y_k^0\right)^2,  \label{eq:Jerit}
\end{align}
and selects the parameter which minimizes $J_{erit}(\rho)$. 

\subsection{Optimality of ERIT}

Now consider the cost function of ERIT given by \eqref{eq:Jerit}. 
By noting 
\begin{align}
T_dr_k = T_d \frac{1+PC_{fb}}{1+PC_{fb}} r_k, 
\end{align}
and 
\begin{align}
&\frac{T_d C_{fb}}{C_{ff}(\rho_0)+T_d C_{fb}} y_k^0 \nonumber \\
&\quad =\frac{T_d C_{fb}}{C_{ff}(\rho_0)+T_d C_{fb}}
\frac{P(C_{ff}(\rho_0)+T_d C_{fb})}{1+PC_{fb}}r_k \nonumber\\
&\quad =\frac{T_d  PC_{fb}}{1+PC_{fb}}r_k,
\end{align}
\rev{it holds that}
\begin{align}
&T_d r_k-  \frac{C_{ff}(\rho)+T_d C_{fb}}{C_{ff}(\rho_0)+T_d C_{fb}} y_k^0 \nonumber \\
&\quad =  \frac{T_d}{1+PC_{fb}}r_k-\frac{C_{ff}(\rho)}{C_{ff}(\rho_0)+T_d C_{fb}} y_k^0. \label{eq:ErrorInERIT}
\end{align}
Since the initial input $u_k^0$ satisfies 
\begin{align}
u_k^0 = \frac{C_{ff}(\rho_0)+T_d C_{fb}}{1+PC_{fb}}r_k, 
\end{align}
\eqref{eq:ErrorInERIT} suggests 
\begin{align}
&T_d r_k-  \frac{C_{ff}(\rho)+T_d C_{fb}}{C_{ff}(\rho_0)+T_d C_{fb}} y_k^0 \nonumber \\
&\quad =  \frac{1}{C_{ff}(\rho_0)+T_d C_{fb}} (T_d u_k-C_{ff}(\rho)y_k^0). 
\end{align}

Based on the above discussions, the cost function of ERIT reduces to 
\begin{align}
J_{erit}(\rho) =\sum_{k=0}^{N-1} \Biggl(  \frac{1}{C_{ff}(\rho_0)+T_d C_{fb}} \left(T_d u_k^0 - C_{ff}(\rho)y_k^0 \right)\Biggr)^2,   \label{eq:JeritL}
\end{align}
which corresponds to $J_L(\rho)$ with 
$L=\frac{1}{C_{ff}(\rho_0)+T_d C_{fb}}$ (or possibly $L=\frac{1}{C_{ff}(\rho_0)+T_d C_{fb}}\frac{1}{z^m}$ to make $L$ proper). 
This result can be summarized as follows;
\begin{thm} \label{thm:ERIT}
Assume that $\frac{1}{C_{ff}(\rho_0)+T_d C_{fb}}$ is stable, and (A1) is satisfied. 
Then, ERIT gives a solution of 
\begin{align}
\minimize\ \|R(T(\rho)-T_d)\|^2, 
\end{align}
when $N\to \infty$. 
\end{thm}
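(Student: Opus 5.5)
The plan is to reduce Theorem~\ref{thm:ERIT} to Theorem~\ref{thm:main} by choosing the weight $W$ suitably. The rewriting \eqref{eq:JeritL} already shows that, for every $N$,
\begin{align}
J_{erit}(\rho)=J_{L_e}(\rho), \quad L_e=\frac{1}{C_{ff}(\rho_0)+T_d C_{fb}}\frac{1}{z^m}, \nonumber
\end{align}
with the harmless delay $\frac{1}{z^m}$ inserted to make $L_e$ proper. The delay only shifts the residual sequence and changes at most finitely many boundary terms, so it does not affect the limit as $N\to\infty$.

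Next, compare $L_e$ with the optimal filter \eqref{eq:optimalL}. Setting $W=R$ gives $\frac{W}{R}=1$, and then \eqref{eq:optimalL} coincides exactly with $L_e$. The hypothesis that $\frac{1}{C_{ff}(\rho_0)+T_d C_{fb}}$ is stable is precisely the stability condition on $L$ required by Theorem~\ref{thm:main}, and (A1) is assumed. Theorem~\ref{thm:main} then applies with $W=R$. Minimizing $J_{L_e}(\rho)=J_{erit}(\rho)$ is therefore equivalent, as $N\to\infty$, to minimizing $J(\rho)=\|R(T(\rho)-T_d)\|^2$, which is the claim.

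Two points need checking before Theorem~\ref{thm:main} is invoked. First, the algebra leading to \eqref{eq:JeritL} must hold as an identity of signals. In the displayed derivation, $u_k$ should read $u_k^0$. The step uses $u_k^0=\frac{C_{ff}(\rho_0)+T_dC_{fb}}{1+PC_{fb}}r_k$ together with \eqref{eq:yk0}. Both are exact in the noise-free closed loop, up to initial-condition transients, which vanish in the limit because the loop is internally stable and $L_e$ is stable. Second, Theorem~\ref{thm:main} with $W=R$ requires $W$ to be a legitimate weight, i.e., $R\,(T(\rho)-T_d)$ must have finite $H_2$ norm. This is the step I expect to be the main obstacle. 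In case 1 of (A1), $R$ may be a discrete integrator (step reference), so $\|R(T(\rho)-T_d)\|$ is finite only if $T(\rho)-T_d$ has a zero at $z=1$.

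I would handle the integrator case by going back to the proof of Theorem~\ref{thm:main} directly. There, $L_e u_k^0$ and $L_e y_k^0$ reduce to the impulse responses of $\frac{R}{1+PC_{fb}}$ and $\frac{PR}{1+PC_{fb}}$ respectively, because the factor $\frac{1}{R}$ in $L$ cancels the $R$ in the reference. Hence $J_{erit}(\rho)$ is the truncated squared impulse response of $\frac{R}{1+PC_{fb}}(PC_{ff}(\rho)-T_d)$, which by \eqref{eq:Jrho_transformed} with $W=R$ is the truncation of $\|R(T(\rho)-T_d)\|^2$. Whenever that norm is finite, the partial sums converge to it. When it is infinite for some $\rho$, it is infinite for both criteria, so the set of minimizers (those $\rho$ with finite cost) is unaffected. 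Case 2 then follows verbatim from the convergence-in-probability argument already given in the proof of Theorem~\ref{thm:main}.
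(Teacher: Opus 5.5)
Your proposal is correct and follows the paper's own (omitted) argument. \eqref{eq:JeritL} identifies $J_{erit}$ with $J_L$ for $L=\frac{1}{C_{ff}(\rho_0)+T_dC_{fb}}\frac{1}{z^m}$, which is \eqref{eq:optimalL} with $W=R$, so Theorem~\ref{thm:main} applies directly; your checks on the $u_k$ versus $u_k^0$ typo and on $\|R(T(\rho)-T_d)\|$ being infinite for an integrator $R$ go beyond what the paper verifies.

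One small correction concerns your transient remark. In case 1 the cost is an unnormalized sum, so decaying initial-condition transients would not vanish in the limit; they would add a finite, $\rho$-dependent term. The identities hold exactly because the experiment and the filters start from rest, and that is what you should invoke instead.
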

The proof is almost trivial from Theorem~\ref{thm:main}, thus omitted. 
Theorem~\ref{thm:ERIT} suggests that ERIT is optimal when the weight function is exactly $R$.  
This implies that ERIT is optimal if we keep employing the same $r_k$. 

\section{Numerical example} \label{sec:sim}

This section gives a numerical example of the proposed optimal shaping filter, 
and shows its effectiveness. 

\subsection{General setting}
In this section, 
$P=\frac{z}{z^2-1.4z+0.98}$ is employed as a target system. 
The reference model and the feedback controller are 
set to $T_d = \frac{10^{-3}z^2}{(z-0.9)^3}$ and $C_{fb}=1$, respectively. 
The feedforward controller $C_{ff}(\rho)$ \rev{is parametrized as} 
\begin{align}
C_{ff}(\rho) = \frac{\rho_1}{z-\rho_2}, \rho=\begin{bmatrix}
\rho_1 & \rho_2
\end{bmatrix}^{\top}. 
\end{align}
This structure is too simple to represent $\frac{T_d}{P}$, thus an appropriate tuning of $\rho$ is 
critical to minimize $\|W(T(\rho)-T_d)\|^2$. 
The initial parameter $\rho_0=\begin{bmatrix}
0& 0
\end{bmatrix}^{\top}$, thus $C_{ff}(\rho_0))=0$. 
The random reference $r_k=R\varepsilon_k$ is used with $R=\frac{z}{z-0.4}$. 
The length of the experiment is  set to $N=800$. 
\rev{
In the following, 
the MATLAB command \texttt{fminsearch} is employed to minimize the cost function. 
}

\subsection{Case 1: $W=1$}

We first consider the case with $W=1$. 
\begin{figure}
\centering
\includegraphics[width=8cm]{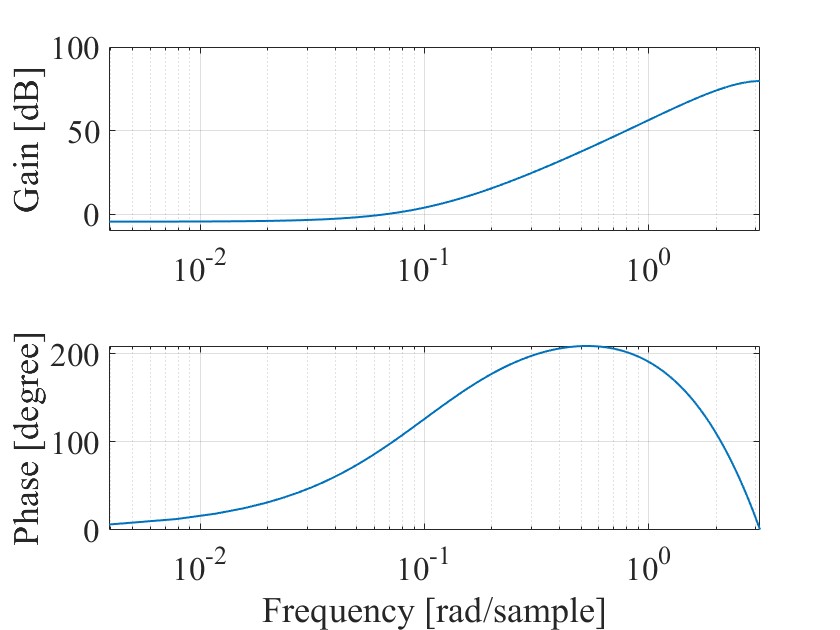}
\caption{Bode diagram of optimal shaping filter}
\label{fig:bodeL}
\end{figure}
Fig.~\ref{fig:bodeL} shows the Bode diagram of the proposed optimal shaping filter in this 
case. In this case, $m=1$ is employed to make $L$ proper. 
The horizontal axes show the frequency, and the vertical axes show the gain and the phase, respectively. 
In this case, the optimal shaping filter shows high-pass behavior. 
\begin{rem}
A high-pass shaping filter may increase the influence of measurement noise 
in practical applications. 
This problem has been recognized in the context of ERIT, and \cite{Fujimoto:2022} gives 
a way to reduce the influence of noise by signal projection. 
\rev{The effectiveness of the signal projection is shown in Sec.~\ref{sec:exp}. }
\end{rem}
\begin{figure}
\centering
\includegraphics[width=8cm]{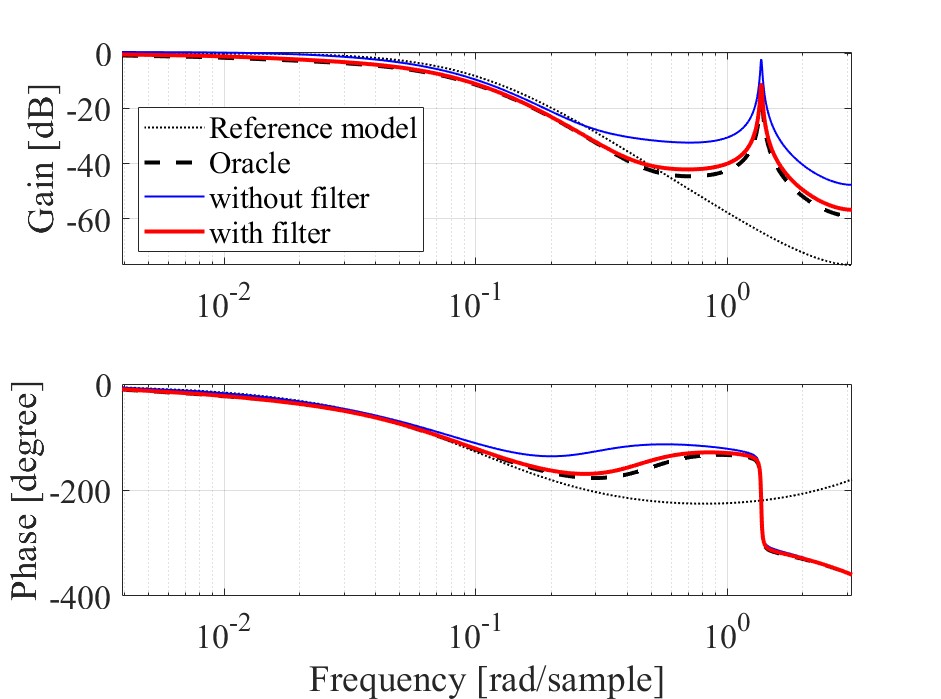}
\caption{Bode diagram of $T(\rho)$}
\label{fig:bodeT_ex1}
\end{figure}
Fig.~\ref{fig:bodeT_ex1} shows the bode diagram of $T(\rho)$ with three different $\rho$; 
the one with shaping filter (minimizer of $J_L(\rho)$), the one without shaping filter 
(minimizer of $J_0(\rho)$), and the one tuned with the true $P$ (minimizer of $J(\rho)$). 
The last one is called \textit{oracle case} in the following. 
In Fig.~\ref{fig:bodeT_ex1}, 
\rev{the dotted, the broken, the thin solid, and the thick solid lines} 
show the reference model, oracle case, the one without shaping filter, and the one with shaping filter, respectively. 
Since $C_{ff}(\rho)$ is too simple in this case, the oracle case is still different from 
the reference model. Recall that the shaping filter plays \rev{a} crucial role in such \rev{a} case. 
If the oracle perfectly matches the reference model, it means $C^*\in \mathcal{C}$ and 
the shaping filter is not required. 
In the low frequency range (lower than $0.2$ [rad/sample]), 
the case without a shaping filter shows closer behavior to the reference model than the others. 
In the high-frequency range, on the other hand, the oracle case and the case with a shaping filter 
become closer to the reference model. 
In particular, the shaping filter makes the result much closer to the oracle one. 

\begin{figure}
\centering
\includegraphics[width=8cm]{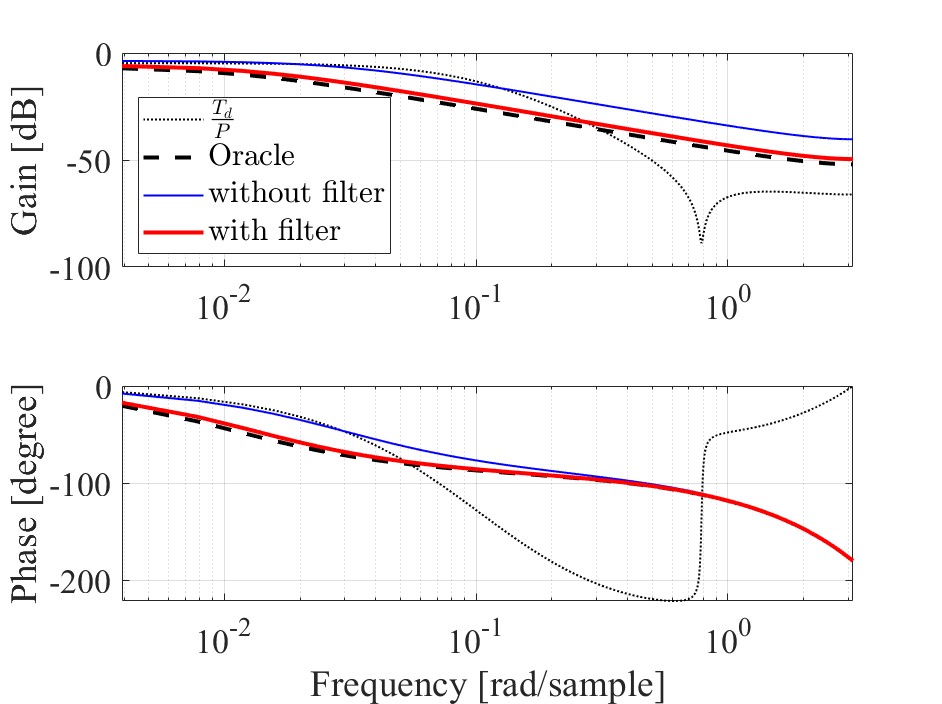}
\caption{Bode diagram of feedforward controllers}
\label{fig:bodeCff_ex1}
\end{figure}
Fig.~\ref{fig:bodeCff_ex1} shows the Bode diagram of feedforward controllers. 
\rev{the dotted, the broken, the thin solid, and the thick solid lines} 
show the optimal controller $\frac{T_d}{P}$, the oracle case, the one without shaping filter, and the one with shaping filter, respectively. 
Recall that the case without shaping filter can be understood as an identification of the optimal controller, 
thus the one without shaping filter becomes much closer to the optimal controller $\frac{T_d}{P}$. 
Also recall that similarity between $C_{ff}(\rho)$ and $C^*$ does not 
immediately mean the similarity between $T(\rho)$ and $T_d$. 
Hence, the case with shaping filter \rev{approximate $T_d$ well in Fig.~\ref{fig:bodeT_ex1} 
although it does not approximate the optimal controller well in Fig.~\ref{fig:bodeCff_ex1}}.

\subsection{Case 2: $W=R$}
This subsection shows the result with $W=R$, i.e., the case where minimizing $J_L(\rho)$ reduces to ERIT. 
Recall that ERIT is optimal when $W=R$, or intuitively, 
when the same $r_k$ is employed in the updated experiment. 
To demonstrate its effectiveness, we compare the 
output for the same $r_k$ with 
1) the oracle parameter, 2) 
the parameter tuned with the shaping filter, and 3) the one without the shaping filter. 

\begin{figure}
\centering
\includegraphics[width=8cm]{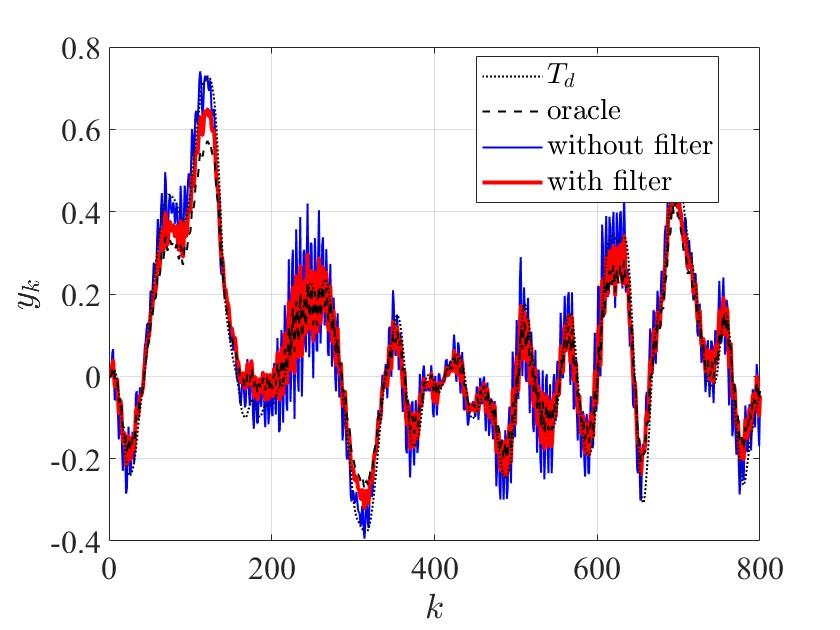}
\caption{Outputs with updated parameters}
\label{fig:outputs_ex2}
\end{figure}
\begin{figure}
\centering
\includegraphics[width=8cm]{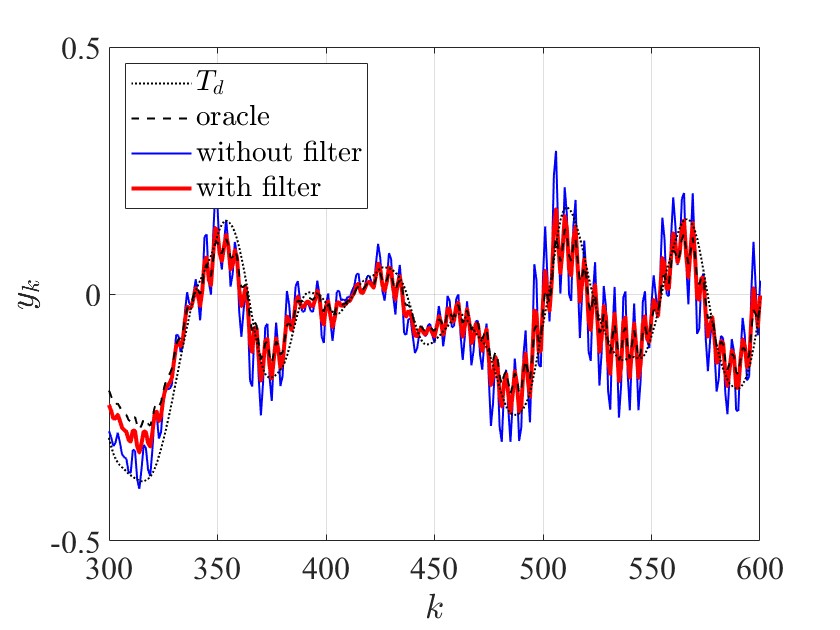}
\caption{Outputs with updated parameters ($300\leq k\leq 600$)}
\label{fig:outputs_ex2_large}
\end{figure}

Fig.~\ref{fig:outputs_ex2} shows the outputs with updated parameters, 
and Fig.~\ref{fig:outputs_ex2_large} shows its specific interval ($300\leq k \leq 600$). 
The horizontal \rev{axes show} the time step $k$, and the vertical \rev{axes show} $y_k$. 
\rev{The dotted and the broke lines show} $T_d r_k$ and $T(\rho^*) r_k$ where $\rho^*$ denotes the oracle parameter, 
respectively. 
The thin and thick solid lines show the result without and with the shaping filter, respectively. 
The output without the shaping filter shows oscillating behavior, and the one with the 
shaping filter 
much reduces such an oscillation. 
The squared error from $T_d r_k$ is 4.86 without filter, and 2.76 with filter. 
In this example, the squared error is reduced by about 43\% by employing the shaping filter. 

This result suggests that the shaping filter 
(or equivalently, using ERIT) 
plays an 
important role when considering the same reference is employed.  

\rev{
\section{Practical Experiment} \label{sec:exp}
To show the effectiveness of the proposed shaping filter, 
this section demonstrates a practical experiment with a Quanser Rotary Servo Base Unit. 
The input/output of this motor are voltage/angle in this experiment. 
The sampling period is $0.005$ [s], and we consider the case with 
step reference $r_k=\frac{\pi}{2}$ [rad]. 
The feedback controller $C_{fb}$ was set to $8+0.5\frac{1}{z-1}$, and the initial feedforward controller was zero. 
The reference model $T_d$ was set to $\frac{0.05^4 z^3}{(z-0.95)^4}$. 
We employed $W=R$ in this experiment, and compared the results with another experiment with an updated controller and the same step reference. 
To reduce the noise influence, we employed signal projection proposed in \cite{Fujimoto:2022}. 
The number of signal bases is set to 10, and the first order Butterworth filter with normalized cutoff frequency 0.02 is 
used to generate the bases. See \cite{Fujimoto:2022} for more information on this signal projection. 

\begin{figure}
\centering
\includegraphics[width=8cm]{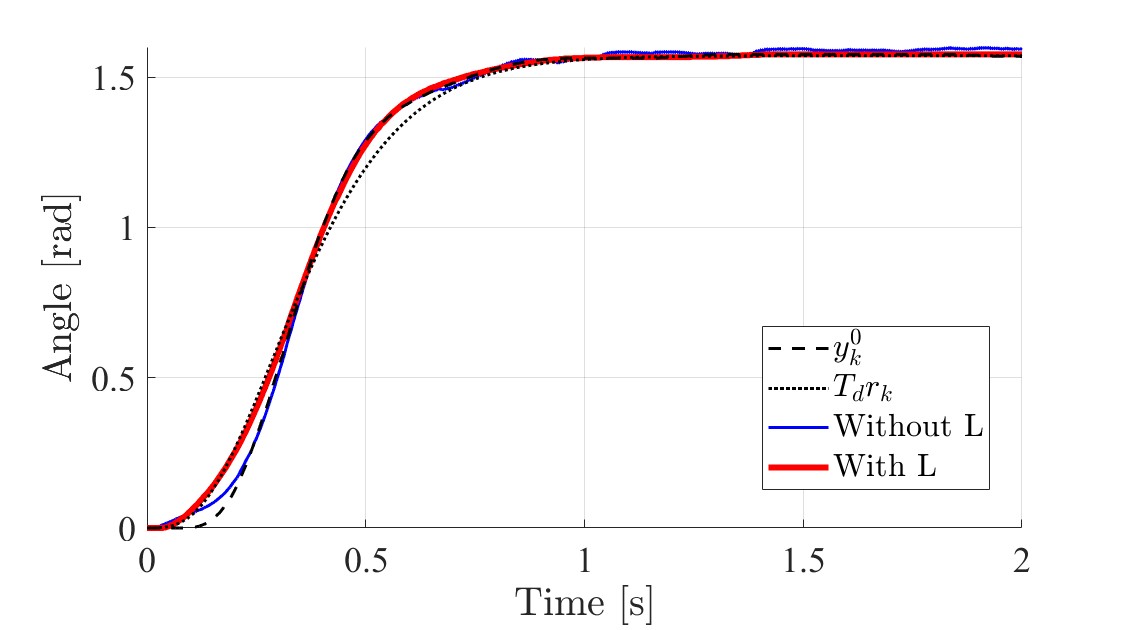}
\caption{Outputs with initial/updated controllers}
\label{fig:expresults}
\end{figure}
Fig.~\ref{fig:expresults} shows the outputs with/without a shaping filter. 
The horizontal axis shows time, and the vertical axis shows the angle. 
The dotted, broken, thin solid, and thick solid lines show 
the desired output, the initial output, the updated output without shaping filter, and the 
updated output with shaping filter, respectively. 
Fig.~\ref{fig:expresults} shows that the updated controller with the proposed shaping filter (thick solid line) well tracks 
the desired output (dotted line) especially around 0 to 0.3 [s]. 
It should also be noted that the updated controller without shaping filter (thin solid line) fails to track 
$T_d r_k$ in the same interval. 
The sum of squared error from $T_d r_k$ is 0.81, 0.76, and 0.26 for initial controller, the updated controller without $L$, and the one with $L$, respectively. 
This result shows that the proposed shaping filter works well for a practical system.


}

\section{Conclusion} \label{sec:conclusion}

This paper discusses the data-driven feedforward controller design 
 to achieve model matching. 
In particular, this paper focuses on the shaping filter design problem, and 
gives the optimal shaping filter. 
The proposed shaping filter is constructed from available information under reasonable assumptions; 
it requires an initial feedforward controller, reference model, feedback controller, and the spectrum of the 
initial experiment. 
A numerical example \rev{and a practical experiment are} shown to demonstrate the effectiveness of the proposed method. 
\rev{
This paper mainly discussed asymptotic properties. 
Discussing statistical properties with a limited amount of data is one of the future tasks. 
}

%

\bibliographystyle{IEEEtran}
\bibliography{myref}

\end{document}